\newcommand{\ketbra}[2]{|#1\rangle\langle#2|}
\def\tr{\textrm{tr}}
\def\wt{\widetilde}
\def\cp{{\cal P}}
\def\lw{{\cal L}_u}
\def\cc{{\cal C}}
\def\cq{{\cal Q}}
\def\cn{{\cal N}}
\def\ce{{\cal E}}
\def\cR{{\cal R}}
\newtheorem{theorem}{Theorem}
\newtheorem{lemma}{Lemma}
\newtheorem{conjecture}{Conjecture}
\def\squareforqed{\hbox{\rlap{$\sqcap$}$\sqcup$}}
\def\qed{\ifmmode\squareforqed\else{\unskip\nobreak\hfil
\penalty50\hskip1em\null\nobreak\hfil\squareforqed
\parfillskip=0pt\finalhyphendemerits=0\endgraf}\fi}
\begin{document}
\title{Superadditivity of private information for any number of uses of the channel}

\newcommand{\DAMTP}{Department of Applied Mathematics and Theoretical Physics, University of Cambridge, Cambridge CB3 0WA, U.K.}

\author{David Elkouss}
\affiliation{Departamento de An\'alisis Matem\'atico and Instituto de Matem\'atica Interdisciplinar, Universidad Complutense de Madrid, 28040 Madrid, Spain}
\author{Sergii Strelchuk}
\affiliation{\DAMTP}

\begin{abstract}
The quantum capacity of a quantum channel is always smaller than the capacity of the channel for private communication. 
However, both quantities are given by the infinite regularization of respectively the coherent and the private information. 
Here, we construct a family of channels for which the private and coherent information can remain strictly superadditive for unbounded number of uses. We prove this by showing that the coherent information is strictly larger than the private information of a smaller number of uses of the channel. This implies that even though the quantum capacity is upper bounded by the private capacity, the non-regularized quantities can be interleaved. From an operational point of view, the private capacity can be used for gauging the practical value of quantum channels for secure communication and, consequently, for key distribution. We thus show that in order to evaluate the interest a channel for this task it is necessary to optimize the private information over an unlimited number of uses of the channel.

\end{abstract}
\maketitle

%\tableofcontents

%%%%%%%%%%%%%%%%%%%%%%
%\section{Introduction}
%%%%%%%%%%%%%%%%%%%%%%

How well is it possible to characterize the resources available to transmit information? In classical information theory, this proves to be fully within our computational abilities: given a description of a channel, answering the question about its capacity to convey information to the receiver is straightforward. However, our world is inherently quantum and when one turns to the channels which transmit quantum information -- the amount of resources required to compute their capacities is unknown at best. To compute a number of different types of capacity of the quantum channel, defined as regularized quantities~\cite{Lloyd_97,Shor_02,Devetak_05}, it is necessary to perform an unbounded optimization over the number of the copies of the channel. The action of a channel $\cn^{A\to B}$ can be defined via an isometry $V^{A\rightarrow BE}$: $\cn^{A\to B}(\rho)=\tr_E V \rho V^*$, and its complementary channel is $\cn^{A\to E}_c(\rho)=\tr_ B V \rho V^*$. In the following, we will omit the register superscripts when it does not add to clarity. 

The quantum and classical capacity of a channel are given by:
\begin{align}
\cq(\cn) &= \lim_{n\rightarrow\infty}\frac{1}{n} \cq^{(1)}(\cn), \\
\cc(\cn) &= \lim_{n\rightarrow\infty}\frac{1}{n} \cc^{(1)}(\cn) \label{eq:ccapacity}
\end{align}
where
\begin{align}
\cq^{(1)}(\cn) &= \max_{\rho^A} H(B) - H(E) , \\
\cc^{(1)}(\cn) &=\max_{\rho\in\cal R} I(X;B) .
\end{align}
The optimization of the quantum capacity is performed over all valid states on the input register $A$ while the optimization of the classical capacity is performed over $\cR$ the set of classical-quantum states of the form $\rho^{XA} = \sum_x p_x \ketbra{x}{x}^X \otimes \rho_x^A$. Where $X$ is an auxiliary classical register, $H$ is the von Neumann entropy and $I(X;B)$ is the quantum mutual information. 

From the above expressions it follows that one has to optimize over an {\it infinite} number of copies of the channel in order to compute its capacity.
Do we have to resort to the regularized expression in order to compute the capacity of a quantum channel? It has recently been shown that at least in the case of the quantum capacity this is unavoidable~\cite{DiVincenzo_98,Smith_07} even when we attempt to answer the question of whether the channel has any capacity at all~\cite{Cubitt_14}. For the classical capacity, which is known to be superadditive for two uses of the channel~\cite{Hastings_09}, there is some evidence that ultimately the regularization might not be required~\cite{Montanaro_13,Smith_13}. 

Arguably, the biggest practical success of quantum information theory to date is the possibility of quantum key distribution (QKD). QKD allows two distant parties to agree on a secret key independent of any eavesdropper. The required assumptions are: access to a quantum channel with positive private capacity and the validity of quantum physics \footnote{In order to characterize the channel and to implement a specific QKD protocol one might need a public authentic classical channel.}. On the other hand, key distribution is a primitive that can only be implemented with classical resources if one is willing to constrain the power of the eavesdropper. Even though there exist practical QKD schemes which enable secure communication over large distances with high key rates~\cite{Comandar_14, Korzh_14, Jouguet_13, Shimizu_14}, some of the fundamental questions about the capacity to transmit secure correlations remain unanswered. 

There are essentially two quantities which describe the ability of the channel to send secure messages to the receiver. The first one, is called private capacity $\cal P$. It has a clear operational interpretation as the maximum rate at which the sender, Alice, can send private {\it classical} communication to the receiver, Bob. It is defined as follows:
\begin{equation}\label{privatecapformula}
\cp(\cn) = \lim_{n\rightarrow \infty}\frac{1}{n}\cp^{(1)}(\cn^{\otimes n}).
\end{equation}
That is the private capacity is given by the regularization of ${\cp}^{(1)}(\cn)$, the private information of the channel, which is given by 
\begin{equation}
\cp^{(1)}(\cn) = \max_{\rho\in\cal R} I(X;B) - I(X;E).
\end{equation}
\noindent One can view private capacity as the optimal rate of reliable communication keeping Eve in a product state with Alice and Bob. 

This capacity characterizes the optimal rates of QKD. A better understanding of this quantity would allow to evaluate precisely the usefulness of communications channels for practical QKD links. For instance, the private capacity of Gaussian channels \cite{Weedbrook_12} remains open. Beyond the pure loss channel \cite{Wolf_07} only lower bounds on the private information of a single use are known.

In the case of private capacity, the eavesdropper, Eve, is given a purification of the channel output which means that she is as powerful as it is allowed by quantum mechanics. However, this setting may be too restrictive for practical applications given the current state of the art in quantum information processing. A natural relaxation of this strong security requirement is to assume that Eve obtains information about the key by performing a measurement on her state. This security requirement is reflected in the second quantity, locking capacity $\mathcal L$. By $\mathcal L$ we denote all the recently introduced locking capacities~\cite{Guha_14} of a quantum channel. They are defined by the optimal rate of reliable classical communication requiring Eve to have vanishing accessible information about the message. This difference in the security criterion has striking consequences. For instance it implies that some channels that have no private capacity have close to maximum locking capacity \cite{Winter_14} and for some relevant classes of channels locked communications can be performed at almost the classical capacity rate \cite{Lupo_14}. The following upper bound is known for the locking capacities:
\begin{equation}\label{wlockingcapformula}
\mathcal L(\cn)\leq\lw(\cn) = \sup_n \frac{1}{n}\lw^{(1)}(\cn^{\otimes n})
\end{equation}
where $\lw^{(1)}$, that we will call the locking information, is given by: 
\begin{equation}
\lw^{(1)}(\cn) = \max_{\rho\in\cal R} I(X;B) - I_\textrm{acc}(X;E).
\end{equation} 
The \textit{accessible information} $I_\textrm{acc}(X;E)=\max_{\Gamma}I(X;Y)$, where $\Gamma$ is the set of all POVMs on $E$. 

Despite the significance of the private and locking information, we still understand very little about its behaviour when the communication channel is used many times. Authors in~\cite{Smith_08b,Kern_08} provide evidence that $\cp^{(1)}(\cn)$ is superadditive for a small finite number of channel uses, although the magnitude of this effect is quantitatively very small. Recently, it has been shown the existence of two quantum channels $\cn_1, \cn_2$ with $\cc(\cn_1)\le2, \cp(\cn_2)=0$ for which $\cp(\cn_1\otimes\cn_2)\ge 1/2\log d$, where $d$ is the dimension of the output of the joint channel~\cite{Smith_09b}. This example shows that the private capacity is a superadditive quantity (this was also proved in \cite{Li_09} using a different construction). 

Even less is known about the locking capacity. It follows trivially that $\lw^{(1)}$ is sandwiched between the classical information and the private information \cite{Guha_14}:
\begin{equation}
\cp^{(1)}(\cn) \leq \lw^{(1)}(\cn) \leq \cc^{(1)}(\cn).
\end{equation}

Here we show that private information can be strictly superadditive for an arbitrarily large number of uses of the channel. More precisely, we prove the following theorem:

\begin{theorem}\label{mainthm}
For any $n$ there exists a quantum channel $\cn_{n}$ such that for $n>k\geq 1$:
\begin{equation}
\frac{1}{k}\cp^{(1)}(\cn_{n}^{\otimes k})<\frac{1}{k+1}\cq^{(1)}(\cn _{n}^{\otimes k+1}).
\end{equation}
\end{theorem}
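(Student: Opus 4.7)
My plan is to construct $\cn_n$ as a classically-flagged mixture (a switched channel) of a few simpler component channels, following the general strategy that has proved effective in other superadditivity and superactivation results. A classical register identifying the active component is appended to Bob's output, so Eve also receives the flag through the Stinespring dilation. The components would include (a) an anti-degradable ``symmetric'' piece with vanishing private information on every tensor power, and (b) a ``key'' ingredient designed so that only a coordinated pattern across $n$ copies produces a substantial amount of coherent information, for instance through a Horodecki private-state payload combined with an erasure-assisted distribution of the key across multiple uses. The mixing weights and component dimensions are tuned so that every flag tuple on $k<n$ uses misses the coordinated pattern, while a flag tuple on $k+1\leq n$ uses can already exhibit a partial pattern whose coherent information is linear in $k+1$ with a rate above any per-copy private information that can be extracted from $k$ uses.

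Since the flag is classical and broadcast, both $\cp^{(1)}(\cn_n^{\otimes k})$ and $\cq^{(1)}(\cn_n^{\otimes k+1})$ decompose as weighted sums over flag tuples of the respective quantities for products of fixed components. For the upper bound on $\cp^{(1)}(\cn_n^{\otimes k})$ I would argue tuple by tuple: by construction every such tuple factorizes into anti-degradable pieces plus an ``active'' piece that appears strictly fewer than $n$ times, and on such factors the private information is bounded by the classical capacity of the active factor (which is small under the chosen scaling) while the anti-degradable factors contribute nothing. Summed against the mixing weights, this should yield $\frac{1}{k}\cp^{(1)}(\cn_n^{\otimes k})\leq R_P(n)$ with $R_P(n)$ tunable. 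For the lower bound on $\cq^{(1)}(\cn_n^{\otimes k+1})$ I would exhibit a concrete coding state, a maximally entangled payload correlated with a flag distribution placing mass on the coordinated pattern, and compute $H(B)-H(E)$ directly to get a linear bound $(k+1)R_Q$ with $R_Q>R_P(n)$.

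The main obstacle is the upper bound on the private information: $\cp^{(1)}$ is generically non-additive and notoriously hard to estimate from above, so the construction must be engineered so that any flag tuple with fewer than $n$ coordinated uses reduces to a product of channels for which a tractable bound on $\cp^{(1)}$ is available (for instance anti-degradability of the passive factors together with a classical-capacity bound on the active ones). Ensuring this uniformly for every $1\leq k\leq n-1$, while simultaneously keeping the cooperative coherent-information rate large for every $k+1\leq n$, is the delicate balancing act; once it is arranged, the strict chain of inequalities in the theorem follows by putting the two bounds together.
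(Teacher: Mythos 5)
Your plan has the right silhouette --- a sender-controlled switch between a component with tiny classical capacity and a cheap ``passive'' component, an upper bound on $\cp^{(1)}$ argued configuration by configuration, and a lower bound from an input entangled across uses --- and this is broadly how the paper proceeds (a switch between $n$ copies of the rocket channel $\cR_d$ and a padded erasure channel $\ce_{p,d}$, with $p=11/24$ and $d=2^{4n^2/(1-2p)}$). But the two steps you lean on hardest are not merely unfinished; as stated they would fail. For the upper bound you claim that on each factorized configuration ``the private information is bounded by the classical capacity of the active factor\dots while the anti-degradable factors contribute nothing.'' That is false, and the counterexample is exactly the Smith--Smolin superactivation you are implicitly building on: there the active factor has $\cc\le 2$, the other factor is an anti-degradable ($50\%$ erasure) channel with zero private capacity, and yet the joint private information is $\tfrac{1}{2}\log d$. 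Tensoring with an anti-degradable channel can contribute enormously to the private information of a pair. What actually works is weaker: bound $\cp^{(1)}$ by $\cc^{(1)}$ of the \emph{whole} configuration and then prove that $\cc^{(1)}$ is additive when one tensor factor is an erasure channel (the paper's Lemma 3, proved by conditioning on the erase/transmit event; additivity for general factors is unavailable by Hastings). The resulting bound necessarily retains the full $(1-p)\log d$ classical contribution of every erasure factor --- the paper's passive component is not even anti-degradable and carries positive private information of its own --- and the theorem survives only because one \emph{additional} use buys coherent information $(1-p)\log d$, matching that per-factor loss, with the $1/k$ versus $1/(k+1)$ normalization mismatch absorbed by taking $d$ exponentially large in $n$.

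The lower bound has a parallel problem: a Horodecki private-state payload is engineered to carry key \emph{without} coherence and generically yields vanishing coherent information, so it cannot supply the $\cq^{(1)}$ lower bound the theorem demands. One needs a mechanism for genuine quantum transmission; in the paper, Alice feeds one half of a maximally entangled state into the to-be-discarded second register of a rocket channel and routes the other half through the erasure channel of a \emph{later} use, so that Bob can coherently undo the controlled dephasing, and each extra use unlocks one more rocket channel at rate $(1-p)\log d$. Two further points are acknowledged but not resolved in your sketch: the decomposition over ``flag tuples'' must be justified (the paper's Lemma 1 does this for sender-controlled switches; a genuinely probabilistic mixture with tuned weights would dilute the good configuration exponentially in $k$), and a single channel must work uniformly for all $1\le k<n$, which in the paper is a short explicit computation once the parameters are fixed. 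As it stands the proposal is a research programme rather than a proof, and its two concrete commitments point in directions that do not work.
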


This proves that entangled inputs increase the private information of a quantum channel and this effect persists for an {\it arbitrary} number of channel uses. As a bonus, we obtain a qualitatively different proof for the unbounded superadditivity of the coherent information~\cite{Cubitt_14}.

We now introduce the key components of our construction which are required to prove Theorem~\ref{mainthm}. 

\textbf{Main construction: switch channel.} 
We first introduce \textit{switch channels}:
\begin{equation}\label{switchchn}
\cn^{SA\to SB}(\rho^{SA})=\sum_i P^{S\rightarrow S}_i\otimes \cn^{A\rightarrow B}_i (\rho^{SA}).
\end{equation}
A switch channel consists of two input registers $S$ and $A$ of dimensions $d$ and $n$ respectively. Register $S$ is measured in the standard basis and conditioned on the measurement outcome $i$ a \textit{component} channel $\cn_i$ is applied to the second register. 
The computation of $\cp^{(1)}(\cn)$ and $\lw^{(1)}(\cn)$ when $\cn$ is of the form~\eqref{switchchn} can be simplified; it suffices to restrict inputs to a special form. The equivalent result for the quantum capacity was proved in \cite{Fukuda_07}.
\begin{lemma}
Consider a switch channel $\cn^{SA\rightarrow SB}$ and let ${\cal T} = \{\rho : \rho = \sum _x p_x \ketbra{x}{x}^X\otimes \ketbra{s}{s}^S\otimes\rho_x^A\}$. Then:
\begin{enumerate}
\item $\cp^{(1)}(\cn) =\max_{1\le s<n} \cp^{(1)}(\cn_s)$,
\item $\lw^{(1)}(\cn) = \max_{1\le s<n} \lw^{(1)}(\cn_s)$.
\end{enumerate}
Both $\cp^{(1)}(\cn)$ and $\lw^{(1)}(\cn)$ can be achieved by some $\rho\in\cal T$.
\end{lemma}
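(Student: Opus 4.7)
The plan is to reduce the single-letter private and locking information of a switch channel to the maximum of the same quantities over the component channels $\cn_i$, exploiting that $\cn$ starts by measuring $S$ in the standard basis. First I would argue that without loss of generality the input can be taken to be c-q on $X$ with $S$ already classical and diagonal in the standard basis: the c-q restriction on $X$ is the usual convexity argument for private/locking information, while the restriction on $S$ is free because both $\cn$ and its complement fully dephase $S$, so only the block-diagonal part of the input contributes to either output. This lets me write the optimizer as $\rho^{XSA}=\sum_{x,i}p_{x,i}\ket{x}\bra{x}^X\otimes\ket{i}\bra{i}^S\otimes\tau_{x,i}^A$.

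Next I would open up the mutual informations. After the channel, Bob's output register $SB$ contains a classical copy of the measurement outcome $i$, and the Stinespring dilation puts an identical classical copy $S_E$ on Eve's side alongside the environment $E_i$ of $\cn_i$. Chain-ruling on this common classical variable and cancelling the shared $I(X;S)$ term yields
\begin{align}
I(X;SB)-I(X;S_EE)=\sum_{i}p_i\bigl[I(X;B|S{=}i)^{\cn_i}-I(X;E|S_E{=}i)^{\cn_i^c}\bigr],
\end{align}
which exhibits $\cp^{(1)}(\cn;\rho)$ as a convex combination of private-information expressions for the component channels, evaluated on the induced conditional inputs. Each summand is bounded by $\cp^{(1)}(\cn_i)\le\max_s\cp^{(1)}(\cn_s)$, and the bound is saturated by fixing $p_{i|x}=\delta_{i,s^\ast}$ for some $s^\ast\in\arg\max_s\cp^{(1)}(\cn_s)$ together with the optimizer of $\cp^{(1)}(\cn_{s^\ast})$ on $A$; this input lies in $\cal T$, closing the achievability half of the lemma.

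The locking case is handled identically once I note that $S_E$ is classical on Eve's side, so her optimal POVM on $S_EE$ may without loss first measure $S_E$ and then apply the best POVM on $E_i$ conditioned on the outcome; this gives $I_\textrm{acc}(X;S_EE)=I(X;S)+\sum_ip_i\,I_\textrm{acc}(X;E|S_E{=}i)^{\cn_i^c}$ and the same cancellation. The only subtle point I anticipate is justifying the restriction to classical inputs on $S$: a priori, coherences on $S$ or non-trivial correlations between $S$ and $X$ could help, but the former are destroyed symmetrically by the measurement on both Bob's and Eve's sides, and the latter are neutralised because $S_E$ is already in Eve's hands, so revealing $i$ to the classical register is neutral and leaves the chain-rule identity above intact.
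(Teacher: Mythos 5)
Your proposal is correct and follows essentially the same route as the paper: condition both mutual informations on the classical switch outcome $s$, cancel the common $I(X;S)$ term, bound each conditional summand by the corresponding component-channel quantity, and saturate with an input of the form $\sum_x p_x \ketbra{x}{x}\otimes\ketbra{s^\ast}{s^\ast}\otimes\rho_x$. You are somewhat more explicit than the paper on two points it leaves implicit --- why coherences on $S$ can be discarded, and why Eve's optimal POVM on the block-diagonal state $S_EE$ may be taken to measure $S_E$ first --- but these are refinements of the same argument, not a different one.
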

%{\it Proof.} 
\begin{proof}
The channel complementary to a switch channel is also a switch channel with component channels $\{{\cn}_i^c\}_{i=1}^n$ complementary to $\{{\cn_i}\}_{i=1}^n$ \cite{Cubitt_14}. We denote the output systems of the complementary channel by $S$ and $E$. Let $\rho\in\cal R$ be the input state that maximizes $\lw^{(1)}(\cn)$. The following chain of inequalities holds:
\begin{align}
I(X&;BS) - I_\textrm{acc}(X;ES)\\ 
            &= \sum_s p_s\Big( I(X;B|S=s) - I_\textrm{acc}(X;E|S=s)\Big) \\
            &\leq \max_s \Big( I(X;B|S=s) - I_\textrm{acc}(X;E|S=s)\Big)\\
            &\leq \max_s \lw^{(1)}(\cn_s).
\end{align}
The first equality follows because $S$ is a classical system. The first inequality follows by choosing the value of $s$ which maximizes the difference between the mutual and the accessible information. The second one since the difference between the between the mutual information to the receiver and the accessible information to the environment is upper bounded by the locking information of the channel $\cn_s$. This upper bound is achievable by an input state of the form $\sigma^{XSA}=\sum_x p_x \ketbra{x}{x} \otimes \ketbra{s}{s} \otimes \rho_x$ where $\tr_S(\sigma^{XSA})$ is the state that achieves the locking information of channel $\cn_s$. Finally note that $\sigma^{XSA}\in\cal T$.

The proof for the private capacity is analogous and follows by replacing $I_\textrm{acc}(X;E)$ with $I(X;E)$.
\end{proof}%\qed

\begin{figure}
\includegraphics[clip=true,width=85mm]{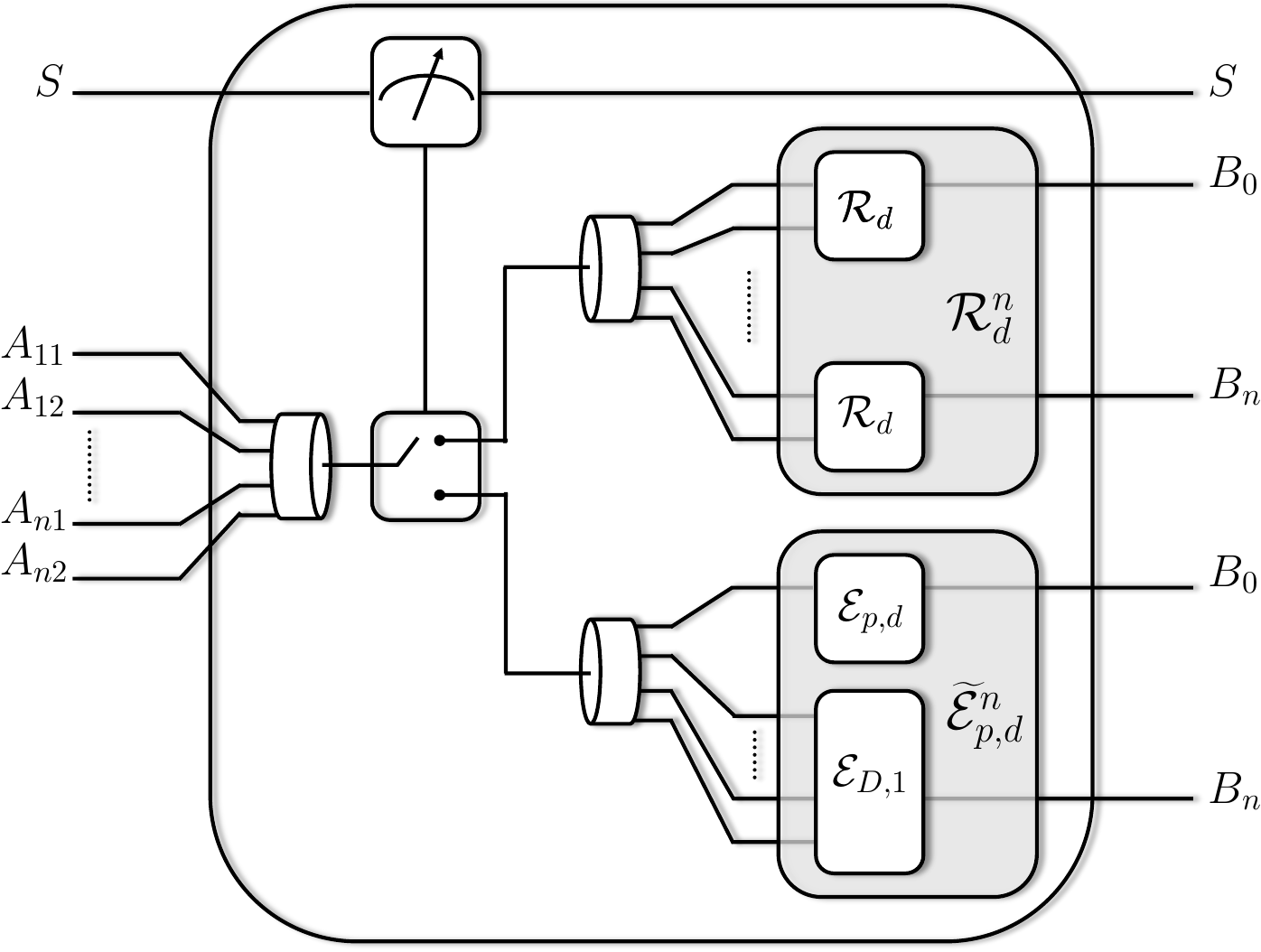}
\caption{The channel has two input registers the control register $S$ and the data register $A$. The control register is measured in the computational basis and depending on the output either the erasure channel $\wt\ce_{p,d}^n$ or $n$ copies of the $d$-dimensional rocket channel are applied.}\label{fig:channel}
\end{figure}

%%%%%%%%%%%%%%%%%%%%%%%%%%%%%%%%%%%%%%%%%%%
There are two types of channels which we will use in place of $\cn_i$. 
The first channel is the erasure channel: 
\begin{equation}
\ce^{A\to B}_{p,d}(\rho_A) = (1-p) \rho_B + p \ketbra{e}{e}_B
\end{equation}
\noindent where $\ketbra{e}{e}$ is the erasure flag and $d$ the dimension of the input register ${A}$. For $p\leq 1/2$ the erasure channel is degradable and ${\cq}(\ce_{p,d})=\cp (\ce_{p,d})=\max\{0, (1-2p)\log d\}$, and $\cc(\ce_{p,d})=(1-p)\log d$.

The locking information of the erasure channel is currently unknown. We might upper bound it by its classical capacity but that gives a too loose bound. A slightly tighter upper bound is obtained in the following lemma (the proof can be found in the Supplemental material):
\begin{lemma}
\label{lem:erasurenlock}
Let $p\leq 1/2$, the locking information of $\ce_{p,d}$ is upper-bounded by
\begin{equation}\label{eq:erasurenlock}
\lw^{(1)}(\ce_{p,d})\le (1-p) \log d - p  \gamma_{d} \log e,
\end{equation}
where $\gamma_d:= \ln d - \sum_{t=2}^d t^{-1}$, and $\lim_{d\to\infty} \gamma_d=\gamma$ is the Euler's constant.
\end{lemma}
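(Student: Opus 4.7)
The plan is to handle the two terms of $\lw^{(1)}$ separately, exploiting the orthogonality of the erasure flag. I would start by reducing to a classical-quantum input $\rho^{XA}=\sum_x p_x \proj{x}^X\otimes\rho_x^A$, as is standard (and as the paper does earlier for switch channels). Choosing the Stinespring isometry $V\ket{\psi}=\sqrt{1-p}\,\ket{\psi}^B\ket{e}^E+\sqrt{p}\,\ket{e}^B\ket{\psi}^E$, a direct calculation gives $I(X;B)=(1-p)\chi(\{p_x,\rho_x\})$ where $\chi$ is the Holevo quantity of the data ensemble, and shows that the complementary channel produces $\sigma_x^E=(1-p)\proj{e}+p\rho_x$, which is block-diagonal with respect to the flag/data split.

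Because the flag subspace is orthogonal to the data subspace, any POVM on $E$ may be refined without loss to one that first projects onto flag versus data and then measures optimally within each block. The flag outcome occurs with probability $1-p$ independently of $X$ and carries no information; the data outcome (probability $p$) leaves Eve with the ensemble $\{p_x,\rho_x\}$. Therefore $I_\textrm{acc}(X;E)=p\,I_\textrm{acc}(\{p_x,\rho_x\})$, so $\lw^{(1)}(\ce_{p,d})=\max_{\{p_x,\rho_x\}}\bigl[(1-p)\chi-p\,I_\textrm{acc}(\{p_x,\rho_x\})\bigr]$.

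I would then invoke the Jozsa--Robb--Wootters lower bound on accessible information, $I_\textrm{acc}(\{p_x,\rho_x\})\geq S(\bar\rho)-Q(\bar\rho)$, where $\bar\rho=\sum_x p_x\rho_x$ and $Q$ denotes the subentropy. Combined with the Holevo inequality $\chi\leq S(\bar\rho)$ this yields $\lw^{(1)}(\ce_{p,d})\leq\max_{\bar\rho}\bigl\{(1-2p)S(\bar\rho)+pQ(\bar\rho)\bigr\}$. For $p\leq 1/2$ both coefficients are nonnegative, and since both $S$ and $Q$ are Schur-concave the maximum is attained at $\bar\rho=I_d/d$. A limit computation of the subentropy formula at coincident eigenvalues gives $Q(I_d/d)=\gamma_d\log e$; together with $S(I_d/d)=\log d$ this produces the bound in the equivalent form $(1-2p)\log d+p\gamma_d\log e$, which can then be rearranged to match the stated expression.

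The main technical obstacle is the explicit evaluation $Q(I_d/d)=\gamma_d\log e$, because the Jozsa--Robb--Wootters formula has coincident eigenvalues at the maximally mixed state and one must take a careful limit to extract the constant $\gamma_d=\ln d-\sum_{t=2}^d t^{-1}$. Everything else --- the reduction to classical-quantum inputs, the block-diagonal argument for $I_\textrm{acc}$, and the invocation of the subentropy inequality --- is either an elementary computation or a standard cited result. One minor point is that the subentropy lower bound is usually stated for pure-state ensembles, so one must either restrict to such ensembles (justified because purifying each $\rho_x$ within the cq-input only weakens the objective) or appeal to the version of the bound that applies to mixed ensembles.
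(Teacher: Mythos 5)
Your strategy is essentially the paper's: use the flag structure of the erasure channel and its complement to reduce the objective to $(1-p)\chi-p\,I_{\mathrm{acc}}$ of the input ensemble, restrict to pure ensembles, and control $I_{\mathrm{acc}}$ via subentropy. But there is a genuine error at the decisive step: the Jozsa--Robb--Wootters lower bound is $I_{\mathrm{acc}}(\{p_x,\psi_x\})\geq Q(\bar\rho)$, \emph{not} $I_{\mathrm{acc}}\geq S(\bar\rho)-Q(\bar\rho)$. The latter is false in general --- for a uniform ensemble of pure states in dimension $d$ one has $I_{\mathrm{acc}}=Q(I/d)=\gamma_d\log e\approx 0.61$ bits while $S-Q\approx\log d$. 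With the correct bound the quantity to maximize is $(1-p)S(\bar\rho)-p\,Q(\bar\rho)$, in which $Q$ carries a \emph{negative} coefficient, so your Schur-concavity argument (``both coefficients are nonnegative, both $S$ and $Q$ are Schur-concave, hence maximize at $I_d/d$'') no longer applies. Moreover your final value $(1-2p)\log d+p\,\gamma_d\log e$ is not a rearrangement of the stated bound: the lemma's right-hand side equals $(1-2p)\log d+p\bigl(\log d-\gamma_d\log e\bigr)$, and the two differ by $p\bigl(\log d-2\gamma_d\log e\bigr)\neq 0$, which grows with $d$. So as written the proof does not establish the lemma.

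The paper's proof repairs exactly this point. After using $I_{\mathrm{acc}}(X;A)\geq Q(A)$ it rewrites $(1-p)H(A)-pQ(A)=(1-2p)H(A)+p\bigl(H(A)-Q(A)\bigr)$ and bounds the second term \emph{uniformly} over states by $H(A)-Q(A)\leq H_n\log e$, which follows from the identity $Q(A)=\langle H(U(A))\rangle_U-H_n\log e$ (entropy averaged over Haar-random measurement bases) together with the fact that a von Neumann measurement cannot decrease entropy; bounding $H(A)\leq\log d$ then gives the claim. If you replace your misstated inequality by $I_{\mathrm{acc}}\geq Q$ and adopt this decomposition (or otherwise justify that $(1-2p)S+p(S-Q)$ is maximized at the maximally mixed state, where $S-Q=\log d-\gamma_d\log e$), your argument goes through. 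The remaining ingredients of your proposal --- the factorizations $I(X;B)=(1-p)\chi$ and $I_{\mathrm{acc}}(X;E)=p\,I_{\mathrm{acc}}(X;A)$ via the block-diagonal POVM refinement, the restriction to pure ensembles (which the paper justifies by degradability of $\ce_{p,d}$ for $p\leq 1/2$, citing Winter), and the evaluation $Q(I_d/d)=\gamma_d\log e$ --- all match the paper and are sound.
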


For any quantum channel $\cn$ used alongside $\ce_{p,d}$ the classical information is additive:
\begin{lemma}
\label{lem:erasureupper}
For all quantum channels $\cn$ 
\begin{equation}\label{eq:lem3}
\cc^{(1)}\left(\cn\otimes \ce_{p,d}^{\otimes n}\right) = \cc^{(1)}(\cn)+n\cc^{(1)}(\ce_{p,d}).
\end{equation}
\end{lemma}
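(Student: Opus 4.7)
This direction is immediate: product cq-inputs that separately achieve $\cc^{(1)}(\cn)$ on $A_0$ and $\cc^{(1)}(\ce_{p,d})$ on each of the $n$ erasure inputs give the RHS as a lower bound for the LHS.

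\textbf{Upper bound.} The plan is to exploit that the erasure flag $|e\rangle$ lies in a one-dimensional subspace orthogonal to the image of the input, so the local projective measurement $\{|e\rangle\langle e|, I-|e\rangle\langle e|\}$ on each output $B_i$ ($i=1,\ldots,n$) extracts a classical erasure flag $F_i\in\{e,\bar e\}$ without disturbing the post-measurement state on $B_i$. Since the erasure probability is input-independent, the joint flag $F=(F_1,\ldots,F_n)$ is distributed as $n$ i.i.d.\ $\mathrm{Bernoulli}(p)$ bits, independent of the classical register $X$.

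Given an arbitrary cq input $\rho^{XA_0A_1\cdots A_n}\in\cR$ with output $\sigma^{XB_0B_1\cdots B_n}$, the fact that $F$ is a deterministic function of $(B_1,\ldots,B_n)$ and independent of $X$ yields, via the chain rule,
\begin{equation*}
I(X;B_0 B_1\cdots B_n)=I(X;F)+I(X;B_0 B_1\cdots B_n\mid F)=I(X;B_0 B_1\cdots B_n\mid F).
\end{equation*}
Conditioning on a specific outcome $F=f$ with non-erased set $S=S(f)$, the erased registers sit in a fixed product state and drop out, while the non-erased data registers are isomorphic to the inputs $A_S$ via the identity channel. Letting $\rho^{XA_0A_S}$ denote the marginal of the input on the relevant registers,
\begin{equation*}
I(X;B_0 B_1\cdots B_n\mid F=f)=I(X;B_0 A_S)_{(\cn\otimes\id)(\rho^{XA_0A_S})}.
\end{equation*}

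To close the argument, I would apply the chain rule $I(X;B_0 A_S)=I(X;B_0)+I(X;A_S\mid B_0)$, bound $I(X;B_0)\le\cc^{(1)}(\cn)$ (since $\cn(\rho^{XA_0})$ is an admissible cq input for $\cn$ alone), and use $I(X;A_S\mid B_0)\le H(A_S)\le|S|\log d$. Averaging over $F$ with $\sum_f P(f)|S(f)|=n(1-p)$ then gives $I(X;B_0\cdots B_n)\le\cc^{(1)}(\cn)+n(1-p)\log d=\cc^{(1)}(\cn)+n\cc^{(1)}(\ce_{p,d})$, as required. The only technical point is confirming that the flag-extraction step preserves mutual information with $X$; this is routine since the projectors are orthogonal and locally defined on each $B_i$, so $F$ is a function of $B$ and the independence of $F$ from $X$ follows directly from the definition of $\ce_{p,d}$. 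I do not expect a deeper obstacle.
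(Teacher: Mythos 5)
Your overall strategy mirrors the paper's: both arguments condition on the (input-independent, non-disturbing) erasure pattern and reduce the problem to bounding the Holevo information of $\cn$ tensored with an identity channel on the surviving registers, i.e.\ to the claim $\cc^{(1)}(\cn\otimes\id_{d^{|S|}})\le\cc^{(1)}(\cn)+|S|\log d$. The paper does this one erasure channel at a time and simply \emph{invokes} the additivity of the classical information with the identity channel as a known fact; you do all $n$ at once and then try to \emph{derive} that additivity from the chain rule. That derivation is where your proof breaks: the step $I(X;A_S\mid B_0)\le H(A_S)$ is false when the conditioning system $B_0$ is quantum. One has $I(X;A_S\mid B_0)=H(A_S\mid B_0)-H(A_S\mid B_0X)$, and the second term can be as negative as $-H(A_S)$, so the correct general bound is only $I(X;A_S\mid B_0)\le 2H(A_S)\le 2|S|\log d$. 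Concretely, take $\cn=\id_2$, $d=2$, $|S|=1$, let $X$ be uniform on $\{1,\dots,4\}$ and let the input conditioned on $X=x$ be the $x$-th Bell state on $A_0A_S$. Then on the non-erased branch $I(X;B_0)=0$ while $I(X;B_0A_S)=2$, so $I(X;A_S\mid B_0)=2>1=|S|\log d$. Your two summands therefore do not individually obey the bounds you assign to them; only their sum does, and proving that the sum obeys $\cc^{(1)}(\cn)+|S|\log d$ \emph{is} the additivity-with-identity statement, which cannot be obtained by this naive splitting.

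The gap is material, not cosmetic: the honest chain-rule bound yields $\cc^{(1)}(\cn\otimes\ce_{p,d}^{\otimes n})\le\cc^{(1)}(\cn)+2n(1-p)\log d$, and the extra factor of $2$ would ruin the comparison with the lower bound in the proof of Theorem~1. To close the argument you need the genuine additivity result $\cc^{(1)}(\mathcal M\otimes\id_{d'})=\cc^{(1)}(\mathcal M)+\log d'$. This is true, but its proof requires more than the chain rule: e.g., refine the input ensemble to pure states $\psi_{x}^{A_0A_1}$, use that $H(B_0A_1)_{\psi_x}=H(E_0)_{\psi_x}$ with $E_0$ the environment of $\cn$, and then use concavity of $\rho\mapsto H(\cn^c(\rho))$ together with $H(\cn^c(\phi))=H(\cn(\phi))$ for pure $\phi$ to recognize the remaining expression as the Holevo quantity of a pure-state input ensemble for $\cn$ alone, plus $\log d'$ from subadditivity on the $A_1$ marginal. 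Either supply such an argument or cite the additivity of the Holevo information with the noiseless channel explicitly, as the paper does.
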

\begin{proof}
It is trivial that the left hand side of Eq. \ref{eq:lem3} is equal or greater than the right hand side. In order to prove the converse inequality consider the following chain of inequalities:
\begin{align}
\label{eq:erasureuppbound}
\cc^{(1)}(\cn &\otimes \ce_{p,d}^{\otimes n}) = \cc^{(1)}(\mathcal M\otimes \ce_{p,d})\\
&= \max_{\rho} I(X;B_1B_2)\\
&= \max_{\rho} (1-p)I(X;B_1A_2) + pI(X;B_1)\\
&\leq (1-p)\cc^{(1)}(\mathcal M\otimes I) + p\cc^{(1)}(\mathcal M)\\
&= \cc^{(1)}(\mathcal M)+(1-p)\log d\\
&= \cc^{(1)}(\cn)+n(1-p)\log d.
\end{align}
The first equality follows by identifying $\mathcal M$ with $\cn\otimes \ce_{p,d}^{\otimes n-1}$. We let $A_1,A_2$ and $B_1,B_2$ be the input and output of $\mathcal M$ and $\ce_{p,d}$ respectively. The second equality is just the definition of the classical information (see Eq. \ref{eq:ccapacity}). The third equality breaks the mutual information depending on the erasure channel transmitting or erasing. The inequality follows by maximizing each of the two mutual informations individually.  The fourth inequality follows by taking into account that the classical information of the identity is additive and the last one by applying the same argument recursively for $n-1$ times.
\end{proof}

Intuitively, Lemma \ref{lem:erasureupper} states that the erasure channel cannot convey more information than an identity channel of dimension $d^{1-p}$ even in the presence of other channels. Furthermore, we can use the classical capacity to obtain a trivial bound for the locking and private information. It turns out that this trivial bound is tight and is saturated by the channel construction that we introduce below.

%%%%%%%%%%%%%%%%%%%%%%%%%%%%%%%%%%%%%%%%%%%

The second channel that we use alongside $\ce_{p,d}$ is a $d$-dimensional `rocket' channel, $\mathcal R_d$~\cite{Smith_09b}. It consists of two $d$-dimensional input registers $A_1$ and $A_2$ and a $d$-dimensional output register $B$. $A_1$ and $A_2$ are first subject to a random unitary and then jointly decoupled with a controlled dephasing gate. Then, the contents of $A_1$ becomes the output of the channel and the contents of $A_2$ is traced out. Bob also receives the classical description of the unitaries which acted on $A_1$ and $A_2$. Since dephasing occurs after the input registers have been scrambled by a random unitary, it is very hard for Alice to code for such channel, hence it has a very low classical capacity: $\cc(\mathcal R_d)\leq 2$. 

%%%%%%%%%%%%%%%%%%%%%%%%%%%%%%%%%%%%%%%%%%%

Our switch channel construction has the following form:
\begin{equation}
\cn_{n,p,d} = P_0 \otimes \cR_d^n + P_1 \otimes \wt\ce_{p,d}^n\label{mainchannel}
\end{equation}
That is, it allows Alice to choose between $\cR_d^n =\cR_d^{\otimes n}$ and $\wt\ce_{p,d}^n=\ce_{p,d} \otimes \ce_{1,d^{2n-1}}$; a $d$-dimensional erasure channel padded with a full erasure channel to match the input dimension of $\mathcal R_d^n$. 

%%%%%%%%%%%%%%%%%%%%%%%%%%%%%%%%%%%%%%%%%%%
\textbf{Upper bound.} To upper bound the private information of $\cn_{n,p,d}$ we only need to optimize over all the possible different choices of $\cR_d^n$ and $\wt\ce_{p,d}^n$. Thus, the upper bound for $\cp^{(1)}(\cn_{n,p,d}^{\otimes k})$ for $k\ge 1$ reads:
\begin{align}
{\cp^{(1)}}(\cn_{n,p,d}^{\otimes k}&) = \max_{0\le i\le k}  \cp^{(1)}(\ce_{p,d}^{\otimes i}\otimes {(\mathcal {R}_d^n)}^{\otimes k-i})\nonumber\\
&\leq  \max  
\left\{
        \begin{aligned}
        &\left. \cc^{(1)}((\mathcal {R}_d^n)^{\otimes k})\right.\\
        &\max_{1\le i\le k-1}\cc^{(1)}(\ce_{p,d}^{\otimes i}\otimes {(\mathcal {R}_d^n)}^{\otimes k-i}), \nonumber \\
        &\left. \cp^{(1)}(\ce_{p,d}^{\otimes k})\right.
       \end{aligned}
\right.\\
&\leq  \max  
\left\{
        \begin{aligned}
        &\left. 2kn,\right.\\
        &\left(2n+(k-1)(1-p)\log d\right). \\
        &\left(1-2p\right)k\log d 
       \end{aligned}
\right.\label{eq:nupperbound}
\end{align}

%%%%%%%%%%%%%%%%%%%%%%%%%%%%%%%%%%%%%%%%%%%
\textbf{Superadditivity of $\cp^{(1)}$}.  
First, we present the input state such that for $j>i$ uses and for some range of parameters allows to conclude that the private information for $j$ uses is higher than the upper bound~\eqref{eq:nupperbound} for $i$ uses. This state has the form:
\begin{equation}
\label{eq:thestate}
\rho =  \bigotimes_{k=1}^{j-1}\left(\Phi^+_{\wt A_{k}A^{[1]}_{k1}}\otimes\Phi^+_{A^{[1]}_{k2}A^{[k+1]}_{11}}\otimes\sigma_{A}\right)
\end{equation}
where $\Phi^+_{AB} = 1/d\sum_{i,j=1}^d|ii\rangle\langle jj|$. Alice chooses the rocket channel and for the remaining $j-1$ uses of the channel she selects $\ce^n_{p,d}$. We denote with superscript $[k]$ the $k$-th use of the channel and the subscript $ij$ indicates the input register as pictured in Fig. \ref{fig:channel}. The state can be read operationally as follows: Alice keeps the $\wt A_{km}$ registers and sends $A^{[1]}_{k1}$  through the first input of $k$-th $\cR_d$ channel, %(denoted as $\cR_d^{[k]}$), 
$A^{[1]}_{k2}$ through the second %input of $\cR_d^{[k]}$ 
(which will be subsequently discarded by the channel) and $A^{[k]}_{11}$ through $\ce_{p,d}$. The remaining inputs do not play any role, so Alice can send any pure state $\sigma_{A}$ through $\ce_{D,1}$ and $\cR_d^{[k]}$ for $k>j$.  It is easy to verify that:
\begin{equation}\label{privatelowerbound}
\cq^{(1)}( \cn_{n,p,d}^{\otimes j},\rho)=\frac{(j-1)(1-p)}{j}\log d.
\end{equation}

This immediately gives a lower bound for the locking and private information. Now, we are ready to prove Theorem~\ref{mainthm}.

\textbf{Proof}. 
Fix $d=2^{4n^2/(1-2p)}$ and $p=\frac{11}{24}$. Then the regularized upper bounds~\eqref{eq:nupperbound} for $\cp^{(1)}$ after $k$ uses of the channel have the form $U^1_k = \frac{2n}{k}$,  $U_k^2 =\frac{2 n (13 (k-1) n+1)}{k}$ and $U_k^3 = 4 n^2$; the lower bound~\eqref{privatelowerbound} after $k+1$ uses of the channel has the form $L_{k+1}=\frac{26 k n^2}{k+1}$.
 
Consider the differences $D^i_k=-U^i_k+L_{k+1}$ for $i=1,2,3$. Then, a simple substitution shows that $D^1_k=\frac{26 k n^2}{k+1}-\frac{2 n}{k}$, $D^2_k=-\frac{2 n (k-13 n+1)}{k (k+1)}$, $D^3_k=\frac{2 (11 k-2) n^2}{k+1}$. 
All of the differences are positive for $n>k\geq 1$.\qed

The results of the theorem indicate that in order to compute the {\it exact} private capacity of a channel $\cn$ it is necessary to compute $\cp^{(1)}(\cn^{\otimes n})$ for an arbitrary number of uses $n$. In addition, 
we found an example whereby for each $n$ and $1\leq k<n$ having access to one additional copy of the channel up to $n$ provides the parties with the largest possible gain in the capacity, proportional to the output dimension of the channel. Note, that for the channel $\cn_{n,p,d}$ strict superadditivity of both private and coherent information holds for all number of uses of the channel up to $n$. This is markedly different from all previously known channel constructions which exhibit various superadditivity effects. Such constructions exhibited superadditivity for some fixed number of uses of the channel $t$ versus $t+c$ for some $c$. Our construction above shows that the private and coherent information of the \textit{same} channel can be strictly superadditive for an arbitrary number of channel uses.

\textbf{Superadditivity of $\lw^{(1)}$}.
Now we study the conditions necessary to obtain a similar result for the locking information of our channel construction. Some algebra shows that the upper bound given by Lemma \ref{lem:erasurenlock} combined with the lower bound given by~\eqref{eq:thestate} do not yield superadditivity. 
However, the ensembles used for the upper bound of $I(X;A)$ and $I_{acc}(X;E)$ in Lemma \ref{lem:erasurenlock} are different; in particular for the former we used a discrete ensemble of orthogonal states, while the latter bound is obtained via the so called Scrooge ensemble~\cite{Jozsa_94}, in consequence for a given ensemble either Alice should obtain less information or Eve more implying that a bound tighter than that of Lemma \ref{lem:erasurenlock} should hold. In particular, if this tighter bound takes the following form: 

\begin{conjecture}\label{weakconjecture}[Sharper upper bound for $\lw^{(1)}$]
\begin{align}
{\lw^{(1)}}(\ce_{p,d}) &\leq (1-p)\log d -  p\epsilon\log{d}.\label{eq:conj}
\end{align}
where $\epsilon > \frac{1-p}{p(n-1)}$.
\end{conjecture}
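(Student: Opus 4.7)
The plan is to tighten Lemma~\ref{lem:erasurenlock} by bounding $I(X;B)$ and $I_\textrm{acc}(X;E)$ against the \emph{same} input ensemble, rather than against the two separate optimizers (orthogonal states and Scrooge) used there. The slack in Lemma~\ref{lem:erasurenlock} arises precisely because those two extremal ensembles cannot be saturated simultaneously, so a joint analysis against a single ensemble should automatically yield a strictly tighter inequality.

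Starting from the same reduction as in the proof of Lemma~\ref{lem:erasurenlock}, the erasure structure of $\ce_{p,d}$ and its complement give $I(X;B) = (1-p)\chi$ and $I_\textrm{acc}(X;E) = p\,I_\textrm{acc}(\{p_x,\rho_x\})$ exactly, so
\begin{equation*}
\lw^{(1)}(\ce_{p,d}) = \max_{\{p_x,\rho_x\}}\bigl[(1-p)\chi - p\,I_\textrm{acc}\bigr].
\end{equation*}
Restricting to pure-state ensembles via a standard convexity argument, $\chi = H(\bar\rho)$ with $\bar\rho = \sum_x p_x\proj{\psi_x}$, while the Jozsa--Robb--Wootters inequality gives $I_\textrm{acc}\ge Q(\bar\rho)$, where $Q$ denotes the subentropy (the same quantity that controls the Scrooge lower bound used in Lemma~\ref{lem:erasurenlock}). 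Combining these,
\begin{equation*}
\lw^{(1)}(\ce_{p,d}) \le \max_{\bar\rho}\Bigl[(1-2p)H(\bar\rho) + p\bigl(H(\bar\rho)-Q(\bar\rho)\bigr)\Bigr].
\end{equation*}
I would then argue that, for $p<1/2$, the right-hand side is maximized at $\bar\rho = I/d$, yielding the sharpened estimate $\lw^{(1)}(\ce_{p,d}) \le (1-2p)\log d + p\gamma_d\log e$. Matching this to the conjectured form $(1-p-p\epsilon)\log d$ forces $\epsilon \le 1 - \gamma_d\log e/\log d$, which tends to $1$ as $d\to\infty$ and hence exceeds $(1-p)/[p(n-1)]$ for all sufficiently large $d$ whenever $n>1/p$ (e.g.\ $n\ge 3$ at $p=11/24$)---exactly the regime in which the chaining with the lower bound~\eqref{privatelowerbound} goes through.

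The hard part is the maximization step: proving that the entropy--subentropy gap $H(\bar\rho)-Q(\bar\rho)$ is dominated by its value $\gamma_d\log e$ at the maximally mixed state. This does not follow from the Schur-concavity of $H$ and $Q$ individually, since $-Q$ is Schur-convex, so a direct argument is required---most plausibly via the Jozsa--Robb--Wootters integral representation of the subentropy, or via an interpolation along eigenvalue paths joining $\bar\rho$ to $I/d$. Even a weaker uniform bound of the form $H(\bar\rho)-Q(\bar\rho)\le c\log d$ with $c<1$ would suffice, at the price of a smaller admissible $\epsilon$. A secondary, more routine obstacle is rigorously justifying the restriction to pure-state ensembles in the presence of the $I_\textrm{acc}$ term, which is not automatic since refining a mixed ensemble to pure states can increase $\chi$ and $I_\textrm{acc}$ simultaneously and one must verify that the net effect on $(1-p)\chi - p I_\textrm{acc}$ is non-negative.
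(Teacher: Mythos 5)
This statement is presented in the paper as a \emph{conjecture}: the authors do not prove it, and only remark that if it held, superadditivity of $\lw^{(1)}$ would follow by the argument of Theorem~\ref{mainthm}. Your attempt therefore has to stand on its own, and it does not, because of a concrete error at the decisive step. The chain leading to $\lw^{(1)}(\ce_{p,d})\le\max_{\bar\rho}\bigl[(1-2p)H(\bar\rho)+p\bigl(H(\bar\rho)-Q(\bar\rho)\bigr)\bigr]$ is literally the paper's proof of Lemma~\ref{lem:erasurenlock} in the Supplemental material, so it cannot produce anything stronger than Lemma~\ref{lem:erasurenlock}. The error is in evaluating the bracket at $\bar\rho=I/d$: there the Jozsa--Robb--Wootters subentropy is $Q(I/d)=\gamma_d\log e$ (a bounded quantity, tending to $(1-\gamma)\log e\approx 0.61$ bits), so the \emph{gap} is $H(I/d)-Q(I/d)=\log d-\gamma_d\log e=\Theta(\log d)$, not the constant $\gamma_d\log e$ you wrote --- you have substituted the value of $Q$ itself where the gap $H-Q$ belongs. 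With the correct value the bracket at $I/d$ equals $(1-p)\log d-p\gamma_d\log e$, i.e.\ exactly the bound~\eqref{eq:erasurenlock}, which corresponds to $\epsilon=\gamma_d\log e/\log d=O(1/\log d)$; for the parameters of Theorem~\ref{mainthm} one has $d=2^{4n^2/(1-2p)}$, so this $\epsilon$ is far below the required constant $(1-p)/(p(n-1))$, and the paper already notes that Lemma~\ref{lem:erasurenlock} does not yield superadditivity. Your fallback, a uniform bound $H(\bar\rho)-Q(\bar\rho)\le c\log d$ with $c<1$, is likewise false for large $d$, again because the gap at the maximally mixed state is $\log d-O(1)$.

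The obstruction is structural, not merely computational. Since the subentropy satisfies $Q(\rho)\le(1-\gamma)\log e<1$ bit for \emph{every} state, any lower bound on Eve's accessible information that factors through $Q(\bar\rho)$ certifies less than one bit of leakage, and hence can never produce a deficit of the form $p\,\epsilon\log d$ with $\epsilon$ bounded away from zero. What the conjecture actually requires is a trade-off statement: every ensemble whose Holevo information to Bob is close to $\log d$ must leak $\Omega(\log d)$ of \emph{accessible} information to the complementary output. That is the genuinely new ingredient --- it is precisely what the authors identify as missing when they observe that the extremal ensembles for the two terms of Lemma~\ref{lem:erasurenlock} differ and that a tighter bound ``should hold'' --- and it does not follow from the Jozsa--Robb--Wootters machinery, from Schur-convexity considerations, or from interpolation along eigenvalue paths, all of which at best reproduce Lemma~\ref{lem:erasurenlock}. (The restriction to pure-state ensembles that you flag as a secondary obstacle is, by contrast, unproblematic: it is handled in the paper by degradability of $\ce_{p,d}$ for $p\le 1/2$.)
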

The same techniques used in the proof of Theorem \ref{mainthm} would allow to prove superadditivity.

%%%%%%%%%%%%%%%%%%%%
\textbf{Discussion}.
%%%%%%%%%%%%%%%%%%%%
In this paper we have constructed a family of channels for which the private and coherent information can remain strictly superadditive for any  number of uses of the channel. We are able to prove this result by showing that the private information of $k$ uses of the channel is smaller than the coherent information of $k+1$ uses. That is, both quantities can be interleaved use after use for the first $n$ uses of the channel. This shows that even though the quantum capacity is upper bounded by the infinite regularization of the private information, the quantum capacity can be larger than a finite regularization of the private information.
 
The private and locking capacity of a quantum channel characterize its ability to convey classical information securely with two different security criteria. We proved that in order to compute the private capacity it is necessary to consider regularized expressions~\eqref{privatecapformula}. Similarly, we expect weak locking information to be superadditive. For this to be true with our channel construction a tighter bound on the accessible information to the environment would be necessary.

The results shown here raise questions about the properties that a channel has to verify such that its different capacities can be computed exactly using only finitely many (preferably only a few) copies of the channel.

{\bf Acknowledgements:} We thank David Perez Garc\'ia and Maris Ozols for many useful discussions and feedback. 
SS acknowledges the support of Sidney Sussex College. DE acknowledges financial support from the European CHIST-ERA project CQC (funded partially by MINECO grant PRI-PIMCHI-2011-1071) and from Comunidad de Madrid (grant QUITEMAD+-CM, ref. S2013/ICE-2801). This work has been partially supported by the project HyQuNet (Grant No. TEC2012-35673), funded by Ministerio de Econom\'ia y Competitividad (MINECO), Spain. This work was made possible through the support of grant \#48322 from the John Templeton Foundation. The opinions expressed in this publication are those of the authors and do not necessarily reflect the views of the John Templeton Foundation.

\appendix
\section{Proof of Lemma 2}
We denote by $U$ a von Neumann measurement in the basis induced by applying unitary $U$ to the computational basis and by $\left\langle H(U(A)\right\rangle_U$ the entropy averaged over unitaries sampled according to the Haar measure. 
The subentropy of a state $\rho^A$ can be written as:
\begin{equation}
\label{eq:subentropy}
Q(A)=\left\langle H(U(A)\right\rangle_U-H_n \log e
\end{equation}
where $H_n=\sum_{i=1}^n 1/i$ \cite{Jozsa_94}. With Eq. \ref{eq:subentropy} we obtain the following upper bound of the difference between entropy and subentropy:
\begin{align}
\label{eq:subentdif}
H(A)-Q(A) &= H(A) - \left\langle H(U(A)\right\rangle_U + H_n \log e \nonumber \\
              &= \left\langle H(A) - H(U(A)\right\rangle_U + H_n \log e \nonumber \\
              &\leq H_n \log e
\end{align}
the inequality follows since a measurement either leaves entropy unchanged (measurement in the eigenbasis of the state) either increases it.

%\begin{proof} 
For degradable channels the optimization of $\lw$ can be restricted to pure ensembles \cite{Winter_14}, that is, states of the form $\sum_x p_x \ketbra{x}{x}\otimes\ketbra{\phi_x}{\phi_x}$. If we restrict the optimization to states $\rho$ of this form the following chain of inequalities hold:
\begin{align}\label{eq:erasure1lock}
\lw^{(1)}(\ce_{p,d}^{\otimes n}) &= \max_{\rho} I(X;B) - I_\textrm{acc}(X;E)\nonumber\\
                           &= \max_{\rho} (1-p) I(X;A) - p I_\textrm{acc}(X;A)\nonumber\\
                           &\leq \max_{\rho} (1-p) H(A) - p Q(A)\nonumber\\
                           &= \max_{\rho} (1-2p) H(A) + p (H(A)-Q(A)) \nonumber\\
                           &\leq (1-2p) \log d + p H_n \log e \nonumber \\
                           &= (1-p) \log d - p \gamma_{d} \log e. \nonumber
\end{align}
The first inequality follows by bounding the mutual information with the entropy of the input and the accessible information with the subentropy of the input. Subentropy is known to be a lower bound on the accessible information~\cite{Jozsa_94}. The second inequality follows from Eq. \ref{eq:subentdif}.

\end{document}